\documentclass[reqno]{amsart}
\usepackage{amssymb,epsfig,graphics,graphicx,bbm,hyperref,color}
\usepackage{xfrac,mathtools,MnSymbol}
\usepackage{epstopdf}
\DeclareGraphicsRule{.1}{mps}{*}{}

\usepackage{multicol}
\usepackage{enumitem}


\definecolor{gray}{rgb}{0.93,0.93,0.93}
\definecolor{light-gold}{rgb}{0.99,0.97,0.78}

\setlength{\oddsidemargin}{10mm}
\setlength{\evensidemargin}{15mm}
\setlength{\textwidth}{142mm}
\setlength{\textheight}{208mm}

\def\be{\begin{equation}}
\def\ee{\end{equation}}
\def\bm{\begin{multline}}
\def\eem{\end{multline}}
\def\bfig{\begin{figure}[htb]}
\def\efig{\end{figure}}

\newcommand{\e}[1]{\,{\rm e}^{#1}\,}

\def\tr{{\operatorname{tr\,}}}

\numberwithin{equation}{section}
\newtheorem{theorem}{Theorem}[section]
\newtheorem{proposition}[theorem]{Proposition}

\newtheorem{remark}[theorem]{Remark}

\newcommand{\caE}{{\mathcal E}}
\newcommand{\caH}{{\mathcal H}}

\newcommand{\bbC}{{\mathbb C}}

\newcommand{\bbZ}{{\mathbb Z}}

\newcommand{\eps}{{\varepsilon}}

 
\newcommand{\EE}{\mathbb{E}}

\newcommand{\PP}{\mathbb{P}}

\newcommand{\RR}{\mathbb{R}}

\newcommand{\cE}{\mathcal{E}}

\renewcommand{\a}{\alpha}
\renewcommand{\b}{\beta}
 
\renewcommand{\d}{\delta}

\renewcommand{\L}{\Lambda}
\newcommand{\om}{\omega}

\renewcommand{\S}{\Sigma} 
\newcommand{\s}{\sigma}

\newcommand{\es}{\varnothing}
\newcommand{\se}{\subseteq}
\newcommand{\sm}{\setminus}
\newcommand{\lra}{\leftrightarrow}
\newcommand{\crit}{\mathrm{c}}
\newcommand{\oo}{\infty}
\newcommand{\one}{\hbox{\rm 1\kern-.27em I}}

\makeatletter
  \def\tagform@#1{\maketag@@@{\footnotesize{(#1)}\@@italiccorr}}
\makeatother

\renewcommand{\eqref}[1]{(\ref{#1})}

\begin{document}

\title[Heisenberg models on trees, via   random loops]
{Critical temperature of Heisenberg models on regular trees, via
  random loops} 

\author{Jakob E. Bj\"ornberg}
\address{
Department of Mathematical Sciences,
Chalmers University of Technology and the University of Gothenburg,
Sweden}
\email{jakob.bjornberg@gu.se}

\author{Daniel Ueltschi}
\address{Department of Mathematics, University of Warwick,
Coventry, CV4 7AL, United Kingdom}
\email{daniel@ueltschi.org}

\subjclass{60K35, 82B20, 82B26, 82B31}

\keywords{Random loop model; quantum Heisenberg}

\begin{abstract}
We estimate the critical temperature of a family of quantum 
spin systems on regular trees of large degree.
The systems include the spin-$\frac12$ XXZ model and the spin-1 nematic model. Our formula is conjectured to be valid for large-dimensional cubic lattices. Our method of proof uses  a probabilistic
representation in terms of random loops.
\end{abstract}

\thanks{\copyright{}2018 by the authors. This paper may be reproduced, in its
entirety, for non-commercial purposes.}

\maketitle

\section{Introduction and main result}

The main goal of this study is to predict an expression for the
critical temperature of a family of quantum spin systems on the cubic
lattice $\bbZ^\nu$ that holds asymptotically for large dimension
$\nu$. More precisely, we propose the first two terms in the expansion
in powers of $\nu^{-1}$. The family of quantum spin systems includes
the spin $\frac12$ ferromagnetic and antiferromagnetic Heisenberg
models and the XXZ model. We also consider spin 1 quantum nematic
systems. Our results are expected to be exact but they are not
rigorous on $\bbZ^\nu$. In fact we do not perform calculations with
the cubic lattice but we consider the model on regular trees
with $d$ descendants; we obtain the first two terms of the critical
inverse temperature in powers of $d^{-1}$. For trees our computations
are completely rigorous.  We conjecture that our
expression  applies to $\bbZ^\nu$ when taking $d = 2\nu-1$.

\subsection{Random-loop model}
\label{sec random loop model}

Our method is based on using a random loop representation, which we
now describe.  
The relevant model of random loops may be 
defined for arbitrary finite graphs, here we consider mainly trees.
Let $T$ denote an infinite rooted tree 
where each vertex has $d\geq2$
offspring, and write $\rho$ for its root.
We sometimes refer to the number of offspring of vertex as its
\emph{outdegree}. 
For $m\geq 0$ let $T_m$ denote the subtree of $T$ consisting of
the first $m$ generations ($\rho$ being generation zero).
Write $V_m$ and $E_m$ for the vertex- and edge sets of $T_m$.

Let $\PP_m(\cdot)$ denote a probability measure 
governing a collection $\om=(\om_{xy}:xy\in E_m)$ of
independent Poisson processes on the interval $[0,1]$, indexed by the
edge-set $E_m$, each having rate $\b$ (the inverse-temperature).
We refer to realizations of
$\om$ as a collections of \emph{links}, and to $\om_{xy}$
as the links \emph{supported by} the edge $xy$.
Thus, disjoint sub-intervals $I,J\se[0,1]$
independently receive uniformly placed links, their number being 
Poisson-distributed with mean $\b|I|$ and $\b|J|$, respectively.
We write $\EE_m[\cdot]$ for expectation under $\PP_m(\cdot)$.

A given link is assigned to be a \emph{cross} with probability $u$,
otherwise a \emph{double-bar}, independently between different links.
The collection of links then decomposes $T_m\times[0,1]$ into a
collection of disjoint \emph{loops} in a natural way.  Rather than
giving a formal definition here, we refer to Fig.~\ref{fig loops}.
A formal definition may be found e.g.\ in 
\cite[Sect. 2.1]{Uel13}.
\bfig
\includegraphics[width=45mm]{loops.1}
\caption{Random loops coming from a configuration $\om$
  of crosses and bars, in the case when the underlying graph is a line
  with seven vertices.  To each vertex corresponds a vertical line
  segment which is a copy of the interval $[0,1]$. 
On following a loop one reverses direction when traversing a
  double-bar, maintains direction when traversing a cross, and proceeds
  periodically in the vertical 
  direction.  In this example there are $\ell(\om)=4$ loops.}
\label{fig loops}
\efig

The total number of loops is denoted $\ell=\ell(\om)$. 
We actually work with a weighted version of $\PP_m(\cdot)$,  denoted
$\PP^{(\theta)}_m(\cdot)$ with a positive parameter $\theta$.
This is the probability measure whose expectation operator
$\EE_m^{(\theta)}[\cdot]$ is given by 
\[
\EE^{(\theta)}_m[X]=\frac{\EE_m[X\theta^{\ell(\om)}]}
{\EE_m[\theta^{\ell(\om)}]}.
\]
Note that $\PP^{(1)}_m=\PP_m$.

All loops are small when $\beta$ is small,
this may be shown e.g.\ as in \cite[Thm. 6.1]{GUW}.
 But it is expected that
there exists $\beta_\crit$, that depends on the parameter 
$\theta$ and the outdegree $d$, such that a given points lies 
in an infinite loop 
with positive probability for $\beta > \beta_\crit$. 
Our main result is a formula for $\beta_\crit$; it is asymptotic
in the outdegree $d\to\oo$, namely
\be
\label{main formula}
\frac{\beta_\crit}{\theta}=\frac{1}{d}+
\frac{1-\theta u(1-u)-\tfrac16\theta^2(1-u)^2}{d^2}
+o(d^{-2}),
\ee
and we can prove that there are infinite loops for $\beta>\beta_\crit$
in the vicinity of $\beta_\crit$.
For a more precise statement, see Theorem \ref{thm} below.

The first study of this model on trees is due to Angel \cite{Ang}, who
established the presence of long loops for a range of parameters
$\beta$ when $d\geq4$; 
he only considered the case $u=1$ and $\theta=1$. Angel's
results were extended by Hammond \cite{Ham1,Ham2}; he gave a precise
characterisation of the critical parameter $\beta_\crit$ for large
enough $d$.  The formula
\eqref{main formula} was established in \cite{BU-tree} in the case
$\theta=1$, our study following a suggestion of Hammond. 
Very recently Hammond and Hegde \cite{HH} proved that
the formula \eqref{main formula} for $\theta=1$ truly identifies the
critical point, not only in the local sense considered here and in
\cite{BU-tree};  their results hold for large $d$.
Another extension to $\theta \neq 1$ has independently
 been proposed by Betz,
Ehlert, and Lees \cite{BEL}.

\subsection{Quantum spin systems}
\label{sec quantum model}

Let $(\Lambda,\caE)$ denote a graph, with $\Lambda$ the set of
vertices and $\caE$ the set of edges. The
main examples to bear in mind here are
regular trees, and finite subsets of $\bbZ^d$ with nearest-neighbour
edges. The spin-$\frac12$ systems have Hilbert space 
$\caH_\Lambda = \otimes_{x\in\Lambda} \bbC^2$ 
and the hamiltonian is 
\[
H_\Lambda =
-2 \sum_{\{x,y\} \in \caE} \Bigl( S_x^{(1)} S_y^{(1)} + S_x^{(2)}
S_y^{(2)} + \Delta S_x^{(3)} S_y^{(3)} \Bigr), 
\] 
where $S_x^{(i)}$,
$i=1,2,3$ denotes the $i$th spin operator at site $x \in
\Lambda$. Here, $\Delta \in [-1,1]$ is a parameter.

As was progressively understood in \cite{Toth, AN, Uel13}, this
quantum system is represented by the model of random loops with
$\theta=2$ and $u = \frac12 (1+\Delta)$. Indeed, the quantum two-point
correlation function is given by loop correlations, 
\be\label{corr}
\langle S^{(1)}_x S^{(1)}_y\rangle:=
\frac{\tr\big(S^{(1)}_xS^{(1)}_y\e{-\b H_\Lambda}\big)}
{\tr\big(\e{-\b H_\Lambda}\big)} =\tfrac14
\PP_\Lambda^{(\theta=2)}(x\lra y), 
\ee 
where $\{x\lra y\}$ is the
event that $(x,0)$ and $(y,0)$ belong to the same loop. It follows
that magnetic long-range order is related to the occurrence of large
loops.

On $\bbZ^3$, the critical inverse temperature has been computed numerically; it was found that
\be
\label{numerical values}
\beta_{\rm c}^{(\nu=3)}(\Delta) = \begin{cases} 0.596 & \text{if $\Delta = 1$; Troyer et.al.\ \cite{TAW}} \\
0.4960 & \text{if $\Delta=0$; Wessel, private communication in \cite{BBBU}} \\
0.530 & \text{if $\Delta=-1$; Sandvik \cite{San}, Troyer et.al.\ \cite{TAW}} \end{cases}
\ee

For large $\nu$, the lattice $\bbZ^\nu$ behaves like a tree of outdegree $d=2\nu-1$. Our formula \eqref{main formula} gives
\be
\label{formula 2}
\beta_{\rm c}^{(\nu)}(\Delta) = \tfrac1\nu + \tfrac1{\nu^2} \bigl[ 1 - \tfrac16 (1-\Delta)(2+\Delta) \bigr] + o(\tfrac1{\nu^2}).
\ee
With $\nu=3$, the values for $\Delta = 1,0,-1$ are $\frac49, \frac{11}{27}, \frac{11}{27}$ respectively. They corroborate the numerical values \eqref{numerical values} to some extent. Of course, the formula \eqref{formula 2} gets more accurate in high dimensions.

In the case of spin-1 systems, the Hilbert space is $\caH_\Lambda = \otimes_{x\in\Lambda} \bbC^3$ and the hamiltonian is
\be
\label{ham spin 1}
H_\Lambda = -\sum_{\{x,y\} \in \caE} \Bigl( u \vec S_x \cdot S_y + (\vec S_x \cdot S_y)^2 \Bigr),
\ee
See \cite{Uel13}. The phase diagram of this model was
determined in \cite{FKK}. For $0<u<1$ the system displays nematic long-range order at low temperatures (if $d\geq3$; also in the ground state when $d=2$). This was rigorously proved in \cite{TTI,Uel13}. The corresponding loop model has parameter $\theta=3$, and the same $u$ as in \eqref{ham spin 1}.
Loop correlations are related to nematic long-range order, namely
\be
\label{corr2}
\langle A_x A_y \rangle = \tfrac29  \PP_\Lambda^{(\theta=3)}(x\lra y),
\ee
with $A_x = (S_x^{(3)})^2 - \frac23$. We are not aware of numerical calculations of the critical inverse temperature $\beta_{\rm c}$ for this model on $\bbZ^3$. With $\theta=3$ and $d=2\nu-1$, the formula \eqref{main formula} gives
\[
\beta_{\rm c}^{(\nu)}(u) = \tfrac3{2\nu} + \tfrac3{2\nu^2} 
\bigl[ 1 - \tfrac34 (1-u^2) \bigr] + o(\tfrac1{\nu^2}).
\]

\subsection{Main result}

In the rest of this paper we deal only with the probabilistic model of
random loops defined above, and we allow $\theta$ to be any (fixed)
positive real number.  Our main result is that, as the distance between $x$ and
$y$ goes to $\oo$, the two-point function vanishes or stays positive,
according to whether $\b$ is smaller or larger than $\b_\crit$ given
above.
Let us say that a loop \emph{visits} a vertex $x$ of $T_m$ if the loop
contains a point $(x,t)$ for some $t\in[0,1]$.
Motivated by \eqref{corr} and \eqref{corr2} we consider
\[
\s_m = \PP^{(\theta)}_m(\rho \lra m),
\]
that is, $\s_m$ is the $\PP^{(\theta)}_m$-probability 
that $(\rho,0)$ belongs to a  loop which visits some vertex in 
generation $m$ in $T_m$.

Throughout this paper 
we work with $\beta$ of the form
\be\label{beta-form}
\frac{\beta}{\theta}=\frac1d+\frac\alpha{d^2}, \qquad
\mbox{where } |\alpha|\leq\alpha_0
\ee
for some fixed but arbitrary $\alpha_0>0$.  All error terms
$O(\cdot)$, $o(\cdot)$
and constants may depend on $\alpha_0$ but are otherwise uniform in
$\a$.  

\begin{theorem}\label{thm}
Consider $\b$ of the form \eqref{beta-form}, and write
\[
\a_\ast=\a_\ast(\theta,u)=1-\theta u(1-u)-\tfrac16\theta^2(1-u)^2.
\]
For any $\d>0$ there exists
$d_0=d_0(\theta,u,\a_0,\d)$ such that for $d\geq d_0$ we have:
\begin{itemize}
\item if $\a\leq\a_\ast-\d$ then
$\lim_{m\to\oo} \s_m=0$;
\item if $\a\geq\a_\ast+\d$ then
$\liminf_{m\to\oo} \s_m>0$.
\end{itemize}
\end{theorem}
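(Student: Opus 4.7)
The plan is to derive a recursion for $\sigma_m$ that exploits the branching structure of $T$: removing $\rho$ decomposes $T$ into $d$ disjoint subtrees, each isomorphic to the rooted tree with one fewer generation. I would condition on the Poisson processes of links on the $d$ edges incident to $\rho$, together with the cross/double-bar labels of those links, to identify which children $v$ of $\rho$ are \emph{connected} to $(\rho,0)$ in the sense that the loop through $(\rho,0)$, after the surgery imposed by the first-generation configuration, passes through $\{v\}\times[0,1]$. Inside each such subtree the continuation of the loop to generation $m$ is then an independent copy of the problem on $T_m$.

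The main technical obstacle is the non-local weight $\theta^{\ell(\om)}$. I would handle it by decomposing $\ell(\om)=\ell_\rho(\om)+\sum_v\ell_v(\om)+\Delta(\om)$, where $\ell_v$ counts loops lying entirely in the subtree rooted at $v$, $\ell_\rho$ counts loops confined to the first generation, and $\Delta$ records merges and splits induced by the first-generation configuration. Since $\beta d\to\theta$, the total number of first-generation links is Poisson-bounded and $\Delta$ is non-zero for only finitely many subtrees with high probability, so after integrating over the first-generation configuration one arrives at an approximate recursion $\sigma_{m+1}=F_d(\sigma_m)+\text{error}$. Expanding $F_d$ to order $d^{-2}$ and linearising around zero should give $F_d'(0)=1+(\a-\a_\ast)/d+o(1/d)$; the two corrections in $\a_\ast$ arise from (i) pairs of links on a single incident edge with opposite labels, which form a micro-loop of weight $\theta u(1-u)$, and (ii) triples of links near $\rho$ whose combinatorics yield the coefficient $\tfrac16\theta^2(1-u)^2$.

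Given such a recursion, the subcritical case $\a\leq\a_\ast-\d$ is the easier half: one obtains $\sigma_{m+1}\leq(1-c/d)\sigma_m+O(\sigma_m^2)$, from which $\sigma_m\to 0$ by iteration. The hard part is the supercritical case $\a\geq\a_\ast+\d$, because instability of the trivial fixed point does not by itself imply $\liminf\sigma_m>0$; a priori $\sigma_m$ could grow linearly and then be annihilated by negative higher-order terms, or oscillate. To close this step I would dominate from below by a supercritical Galton--Watson process whose offspring distribution counts the children of $\rho$ whose subtree loop reaches generation $m$, verifying that the mean exceeds $1+c/d$ and that the variance is controlled; the survival probability of such a process is then bounded away from zero uniformly in $m$. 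The most delicate ingredient throughout is the combinatorial accounting of all link configurations near $\rho$ contributing at order $d^{-2}$, since their effect on the \emph{global} loop count — and hence on the $\theta^\ell$ weight — must be tracked even though they are localised near the root.
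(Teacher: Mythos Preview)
Your overall framework---recursion for $\sigma_m$ obtained by decomposing at the root, with $\theta^{\ell}$ handled by splitting the loop count into subtree contributions plus a local correction---is the paper's strategy. But two concrete points are wrong, and each would prevent the argument from closing.

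\textbf{The combinatorics for $\alpha_*$ is misidentified.} The coefficient $\tfrac16\theta^2(1-u)^2$ does \emph{not} come from triples of links; three links on first-generation edges contribute only at order $d^{-3}$. Both correction terms come from the event that a \emph{single} edge $\rho x$ carries exactly two links. When the labels are mixed, $\om_{\rho x}$ has one loop (not a ``micro-loop''); when the labels agree, it has two. In the two-double-bar case the loop through $(\rho,0)$ occupies a random fraction $X$ of $\{\rho\}\times[0,1]$ and the \emph{same} fraction $X$ of $\{x\}\times[0,1]$, whereas in the two-cross case the fractions are $X$ and $1-X$; since $\EE X=\tfrac23$, the double-bar case alone produces the asymmetry giving $(1-u)^2$ with coefficient $\tfrac16$. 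To turn this into a recursion in $\sigma$ (rather than in a length-dependent connection probability) one must also condition on the second-generation links out of $x$, which is why the paper's recursion involves $\sigma_{m-2}$ as well as $\sigma_{m-1}$. Your one-generation conditioning will not produce a closed recursion in $\sigma_m$ alone, and your proposed mechanism would yield the wrong $\alpha_*$.

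\textbf{The difficulty is inverted.} The supercritical half is the easy one: once the lower-bound recursion
\[
\sigma_m\geq \tilde\sigma_{m-1}+\tfrac{\tilde\sigma_{m-1}}{d}(\alpha-\alpha_*)-\tfrac12\tilde\sigma_{m-1}^2+O(d^{-3}),
\qquad \tilde\sigma_{m-1}=\sigma_{m-1}\wedge\sigma_{m-2}\wedge(\eps/d),
\]
is in hand, a one-line induction gives $\sigma_m\geq\eps/d$ for all $m$ when $\alpha>\alpha_*$; no Galton--Watson comparison is needed (and yours would have $m$-dependent offspring law). The hard step is the \emph{upper} bound. On the complement of the two ``nice'' local events ($A_1$: at most one link per first-generation edge; $A_2$: one child $x$ with two links and at most one link from $x$ to each grandchild) the loop through $(\rho,0)$ can wander through an arbitrarily complicated local cluster before escaping, and one must show this contributes at most $C d^{-2}(\sigma_{m-1}\vee\sigma_{m-2})$ to $\sigma_m$, \emph{not} merely $O(d^{-2})$. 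The paper handles this by growing the random subtree $\check T$ of edges carrying $\geq 2$ links, factoring $\theta^\ell$ across the boundary of $\check T$, and then using stochastic domination of $\PP_m^{(\theta)}$ by an unweighted Poisson process of rate $\beta^+=(\beta\theta)\vee(\beta/\theta)$ to show $\check T$ is a subcritical branching process. Your sketch treats this as routine, and the inequality $\sigma_{m+1}\leq(1-c/d)\sigma_m+O(\sigma_m^2)$ you write down would not imply $\sigma_m\to0$ without already knowing $\sigma_m$ is small.
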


Let us remark that for $\theta=1$ 
the result was shown in our previous
work \cite{BU-tree}.
The arguments presented here are  strengthened versions of those
arguments. 
The basic strategy is to establish recursion inequalities for the
sequence $\s_m$, see Prop.\ \ref{ineq-lem}.
These are obtained by analyzing the local configuration
around the root $\rho$, in particular we identify two events $A_1$ and
$A_2$ which together contribute
most of the probability in the regime we consider ($d\to\oo$
and $\b$ as in \eqref{beta-form}).

\section{Proof of the main result}

The indicator function of an event $A$ will be written $\one_A$
or $\one\{A\}$.  
The partition function for the loop model on $T_m$ is written
$Z_m=\EE_m[\theta^\ell]$.  For convenience we also define
\be\label{def z}
z_m=e^{-d\b(1-1/\theta)}\frac{\theta Z_{m-1}^d}{Z_m}.
\ee

For given $m\geq1$ and $\eps>0$ we define 
\[
\tilde \s_m = \s_m\wedge \s_{m-1}\wedge (\tfrac\eps d).
\]
(A priori we need not have $\s_m\leq\s_{m-1}$ since they are computed
using different measures.)
In this section we will prove the following recursion-inequalities.

\begin{proposition}\label{ineq-lem}
For all $m\geq1$ we have 
\be\label{large-ineq}
\s_m\geq \tilde \s_{m-1}+
\tfrac{\tilde \s_{m-1}}{d}\big(\alpha-\a_\ast\big)-\tfrac12\tilde\s_{m-1}^2
+ O(d^{-3}),
\ee
and
\be \label{not-large-ineq}
\s_m\leq
(\s_{m-1}\vee\s_{m-2})\big[1+\tfrac1d\big(\a-\a_\ast\big)+O(d^{-2})
\big].
\ee
Here the $O(d^{-3})$ and $O(d^{-2})$ are uniform in $m$.
\end{proposition}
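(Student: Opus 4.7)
The plan is to condition on the restriction of $\om$ to the $d$ edges between $\rho$ and its children in $T_m$, and to decompose the event $\{\rho\lra m\}$ according to this local configuration. Since the $d$ subtrees rooted at the children of $\rho$ are isomorphic to $T_{m-1}$, and any loop visiting generation $m$ of $T_m$ must first enter one of these subtrees via a root-edge link, the conditional probability of $\{\rho\lra m\}$ can be rewritten in terms of $\s_{m-1}$ (and in some cases $\s_{m-2}$), together with Poisson weights for the root-edge links and the normalising ratio $z_m$ from \eqref{def z}, which absorbs the link-free portions of $[0,1]$ on those edges. Because $d\b/\theta = 1 + \a/d$, the expected number of root-edge links is of order one, so only configurations with a bounded number of links at the root contribute at the required order.

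The next step is to isolate the two classes of events $A_1$ and $A_2$ promised in the introduction. The event $A_1$ consists of configurations with exactly one root-edge link, say on edge $\rho y_i$; on $A_1$ the loop through $(\rho,0)$ propagates into the subtree at $y_i$ and, by time-periodicity and the recursive structure, reaches generation $m$ with conditional probability $\s_{m-1}$. Summing over the $d$ choices of child and the two link types produces the leading contribution of order $\theta\s_{m-1}$. The event $A_2$ collects configurations with exactly two root-edge links, in each of the four cross/double-bar combinations and on either one or two edges; weighted by $\theta^\ell$ to account for the loops they create or destroy, these configurations yield the second-order corrections $-\theta u(1-u)$ and $-\tfrac16\theta^2(1-u)^2$ defining $\a_\ast$. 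For the lower bound \eqref{large-ineq} one then keeps the favourable part of $A_1\cup A_2$ and applies a Bonferroni correction to subtract the double-counting when two different children each connect to generation $m$: this produces the $-\tfrac12\tilde\s_{m-1}^2$ term, with the truncation $\tilde\s_{m-1}\leq\eps/d$ ensuring the quadratic remainder is of order $d^{-2}$. For the upper bound \eqref{not-large-ineq} one shows that configurations outside $A_1\cup A_2$—in particular those with three or more root-edge links—contribute at most $(\s_{m-1}\vee\s_{m-2})\cdot O(d^{-2})$, so that the contribution of $A_1\cup A_2$ itself delivers the stated coefficient $1+\tfrac1d(\a-\a_\ast)+O(d^{-2})$.

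The principal obstacle is the second-order bookkeeping needed to pin down $\a_\ast$ exactly. For each two-link pattern one must compute (i) its Poisson probability, which is $O(d^{-2})$; (ii) its effect on the loop count $\ell$, since a cross and a double-bar on the same edge partially cancel, whereas two double-bars produce an additional loop in a way that depends on the ordering of their positions in $[0,1]$; and (iii) the conditional probability of connecting to generation $m$, which is $\s_{m-1}+O(d^{-1})$ or $\s_{m-2}+O(d^{-1})$ depending on how the local loop structure routes between the two affected children. The coefficient $\tfrac16\theta^2(1-u)^2$ in $\a_\ast$ reflects an integration over two ordered link positions on $[0,1]$ together with a factor of $\theta$ for an extra loop, and the coefficient $\theta u(1-u)$ comes from the mixed cross/double-bar configurations; getting these numerical constants right is the delicate calculation at the core of the proof. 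Throughout, the ratio $z_m$ must be tracked uniformly or controlled by its leading value $e+O(d^{-1})$ in order to convert the Poisson-level estimates into the stated inequalities for $\s_m$.
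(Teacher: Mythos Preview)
Your decomposition of the local configuration is not the one that works, and this causes the argument to fail already at leading order. You define $A_1$ as the event of \emph{exactly one} link on the root edges. But the total number of root-edge links is Poisson with mean $d\b\approx\theta$, so $\PP(\text{exactly one link})$ is bounded away from $1$; a short computation shows your $A_1$ contributes only about $e^{-1}\s_{m-1}$ to $\s_m$, not $\s_{m-1}$. The paper's $A_1$ is instead the event that \emph{every} root edge carries at most one link (so the number of links can be any $k\in\{0,\dotsc,d\}$); this event has probability $1-O(d^{-1})$ and, summing over $k$ via \eqref{ell-A1}, yields the full leading term. Similarly, your $A_2$ (two root-edge links, possibly on different edges) is not the right correction event: two single links on distinct edges already lie in the paper's $A_1$. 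The paper's $A_2$ fixes a single child $x$ carrying exactly two links, keeps all other children of $\rho$ at $\leq 1$ link, \emph{and} requires all children of $x$ to carry $\leq 1$ link. This two-generation structure is why $\s_{m-2}$ enters, and the splitting into $A_2^{\mathrm{same}}$ and $A_2^{\mathrm{mix}}$ together with the length variable $X$ (with $\EE X=\tfrac23$) is what produces the constants in $\a_\ast$. Also, $z_m\to 1$ (Proposition~\ref{pf-prop}), not $e$; the factor $e^{-d\b(1-1/\theta)}$ in its definition already removes the exponential.

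For the upper bound you assert that configurations outside $A_1\cup A_2$ contribute at most $(\s_{m-1}\vee\s_{m-2})\,O(d^{-2})$, but you give no mechanism to extract the factor $\s_{m-1}\vee\s_{m-2}$. This is the genuinely hard step: on $(A_1\cup A_2)^c$ the local loop structure is too complicated to control directly, and the $\theta^{\ell}$ weight correlates the root configuration with the subtrees. The paper handles this (Proposition~\ref{tree-bound}) by growing the random subtree $\check T$ of vertices reachable from $\rho$ through edges with $\geq 2$ links, observing that any loop reaching generation $m$ must exit $\check T$ across some edge with a single link, and then combining a conditional-independence factorisation of $\theta^\ell$ with stochastic domination by an unweighted Poisson process of rate $\b^+=\b(\theta\vee\theta^{-1})$ to bound $\EE^{(\theta)}[|V_k(\check T)|\one_{A_1^c\cap A_2^c}]$ geometrically in $k$. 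Without something of this kind, the upper bound is unproved.
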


Our main result follows easily:
\begin{proof}[Proof of Thm \ref{thm}]
First suppose 
$\a<\a_\ast$.  For $d$ large enough the factor in square
brackets in \eqref{not-large-ineq} is strictly smaller than 1.  This
easily gives that $\s_m$ decays to 0 exponentially fast.

Now suppose 
$\a>\a_\ast$.   Clearly
$\s_0=1$, and it is not hard to see that there exists a constant $c_1>0$
such that $\s_1\geq c_1$ for all $d$.  This implies that 
$\tilde \s_1=\eps/d$ if $\eps<c_1$.
If also $\eps<2(\a-\a_\ast)$  and $d$ is large enough
then \eqref{large-ineq} and induction on $m$ give that 
$\s_m\geq\tilde \s_{m}= \eps/d$ for all $m\geq1$.  
\end{proof}

Before turning to the proof of Prop. \ref{ineq-lem}, let us describe
some of the main ideas and also 
what new input is required compared
to our previous work \cite{BU-tree} on the case $\theta=1$.
For the lower bound \eqref{large-ineq} we will estimate the
probability of certain local configurations near $\rho$ which
guarantee that $\rho$ is connected to generation $m$ if certain of its
children (or grandchildren) are.  For the upper bound we similarly estimate
$\PP_m^{(\theta)}(\rho\not\lra m)$ in terms of the probability that
certain of $\rho$'s children (or grandchildren) are blocked from generation
$m$.  When $\theta\neq1$, the configurations in the subtrees rooted at
the children  of $\rho$ are not independent of the local configuration
adjacent to $\rho$.  Thus we must deal carefully with the factor
$\theta^{\ell(\om)}$ and how it behaves in the local configurations
which we consider.  This involves obtaining estimates for the
partition function $Z_m$ in terms of the partition function $Z_{m-1}$
in the smaller tree, which is where the number $z_m$ in \eqref{def z}
becomes relevant.

As was the case in \cite{BU-tree}, the hardest part is the upper bound
\eqref{not-large-ineq}.  This is because we must rule out connections
due to `lower order events' ($(A_1\cup A_2)^c$ in the notation
below) where the loop structure is too complicated to handle directly.  
The main technical advance compared to \cite{BU-tree} started
with a simplification of the argument used there to deal with this
difficulty.  Having this simpler version allowed us to deal also with
the correlations caused by the factor $\theta^{\ell(\om)}$, see
Prop. \ref{tree-bound}. 

\subsection{Preliminary calculations}

Let us first introduce some notations and prove some facts
that will be used for establishing both bounds in 
Prop~\ref{ineq-lem}.  

Write $A_1$ for the event that, for each child $x$ of $\rho$, there is
at most one link between $\rho$ and $x$.  
Write $A_2$ for the event that:  (i) there is a unique child $x$ of
$\rho$ with exactly 2 links between $\rho$ and $x$, (ii) for all
siblings $x'$ of $x$ there is at most one link between $\rho$ and
$x'$, and (iii) for all children $y$ of $x$ there is at most one link
between $x$ and $y$.  See Fig.~\ref{fig events}.

\begin{centering}
\bfig
\begin{picture}(0,0)%
\epsfig{file=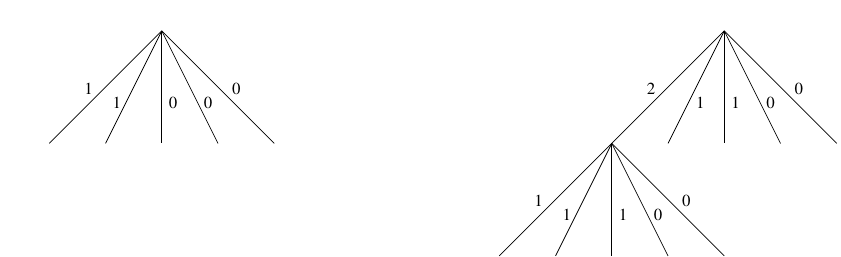}%
\end{picture}%
\setlength{\unitlength}{1776sp}%
\begingroup\makeatletter\ifx\SetFigFont\undefined%
\gdef\SetFigFont#1#2#3#4#5{%
  \reset@font\fontsize{#1}{#2pt}%
  \fontfamily{#3}\fontseries{#4}\fontshape{#5}%
  \selectfont}%
\fi\endgroup%
\begin{picture}(8937,2718)(676,-2773)
\put(5551,-586){\makebox(0,0)[lb]{\smash{{\SetFigFont{9}{10.8}{\rmdefault}{\mddefault}{\updefault}{\color[rgb]{0,0,0}$A_2$:}%
}}}}
\put(2251,-211){\makebox(0,0)[lb]{\smash{{\SetFigFont{7}{8.4}{\rmdefault}{\mddefault}{\updefault}{\color[rgb]{0,0,0}$\rho$}%
}}}}
\put(8251,-211){\makebox(0,0)[lb]{\smash{{\SetFigFont{7}{8.4}{\rmdefault}{\mddefault}{\updefault}{\color[rgb]{0,0,0}$\rho$}%
}}}}
\put(7051,-1486){\makebox(0,0)[lb]{\smash{{\SetFigFont{7}{8.4}{\rmdefault}{\mddefault}{\updefault}{\color[rgb]{0,0,0}$x$}%
}}}}
\put(676,-586){\makebox(0,0)[lb]{\smash{{\SetFigFont{9}{10.8}{\rmdefault}{\mddefault}{\updefault}{\color[rgb]{0,0,0}$A_1$:}%
}}}}
\end{picture}%
\caption{Illustrations of the two events $A_1$ and $A_2$.  Numbers on
  edges indicate the number of links.}
\label{fig events}
\efig
\end{centering}

Let $\zeta_m=1-\s_m$ and let
$B^\rho_m$ be the event that $(\rho,0)$ 
does \emph{not} belong to a
loop which reaches generation $m$ in $T_m$, thus
$\PP^{(\theta)}_m(B^\rho_m)=\zeta_m$. 
Clearly we have that
\be\label{zeta-split-eq}
\zeta_m=\PP^{(\theta)}_m(B^\rho_m)=
\PP^{(\theta)}_m(B_m^\rho\cap A_1)
+\PP^{(\theta)}_m(B_m^\rho\cap A_2)
+\PP^{(\theta)}_m(B_m^\rho\sm (A_1\cup A_2)).
\ee
Let us enumerate the children of $\rho$ by $i=1,\dotsc,d$
and let $\ell_i$ denote the number of loops in the restriction of
$\om$ to the subtree to distance $m$ rooted at child $i$.
On the event $A_1$, and if there are $k$ links from $\rho$,
the number $\ell$ of loops satisfies
\be\label{ell-A1}
\ell=\textstyle\sum_{i=1}^d \ell_i -k +1.
\ee
To see this, one may imagine that the $k$ links to $\rho$ are put in
last, one at a time.  Each such link then merges some loop in the
corresponding subtree with a loop visiting $\rho$.
(This uses the tree-structure of the underlying graph, which implies
that there can be no connections between $\rho$ and 
the subtree until the link is put in.)
It follows that
\[\begin{split}
\EE_m[\theta^\ell\one_{A_1}]&=\sum_{k=0}^d
\theta^{-k+1}\EE_m\Big[\theta^{\sum_i\ell_i}
\one_{A_1}\one\{k\mbox{ links at }\rho\}\Big]
=\theta \sum_{k=0}^d 
\binom{d}{k} (e^{-\b})^{d-k} (e^{-\b}\tfrac\b\theta)^k
Z_{m-1}^{d} 
\\
&=\theta Z_{m-1}^d\big(e^{-\b}(1+\tfrac\b\theta)\big)^d
\end{split}\]
and hence (recalling $z_m$ from \eqref{def z})
\be\label{PA1}
\PP^{(\theta)}_m(A_1)=
z_m
\big(e^{-\b/\theta}(1+\tfrac\b\theta)\big)^d.
\ee
Similarly, since the $k$ children with links would need to be blocked
from reaching distance $m-1$,
we also have
\be\begin{split}\label{PA1B}
\PP^{(\theta)}_m(A_1\cap B_m^\rho)&=
\frac{\theta}{Z_m}\sum_{k=0}^d
\binom{d}{k} (e^{-\b})^{d-k} (e^{-\b}\tfrac\b\theta)^k
Z_{m-1}^{d-k}\EE_{m-1}[\theta^\ell \one_{B^\rho_{m-1}}]^k 
\\
&=z_m
\big(e^{-\b/\theta}(1+\zeta_{m-1}\tfrac\b\theta)\big)^d.
\end{split}\ee

For the event $A_2$, we decompose it as 
$A_2=A_2^{\mathrm{mix}}\cup A_2^{\mathrm{same}}$, according as the 2
links from $\rho$ to $x$ are different sorts (crosses/double-bars) or
the same  (Fig \ref{fig two-connection}).  
\bfig
\includegraphics[width=65mm]{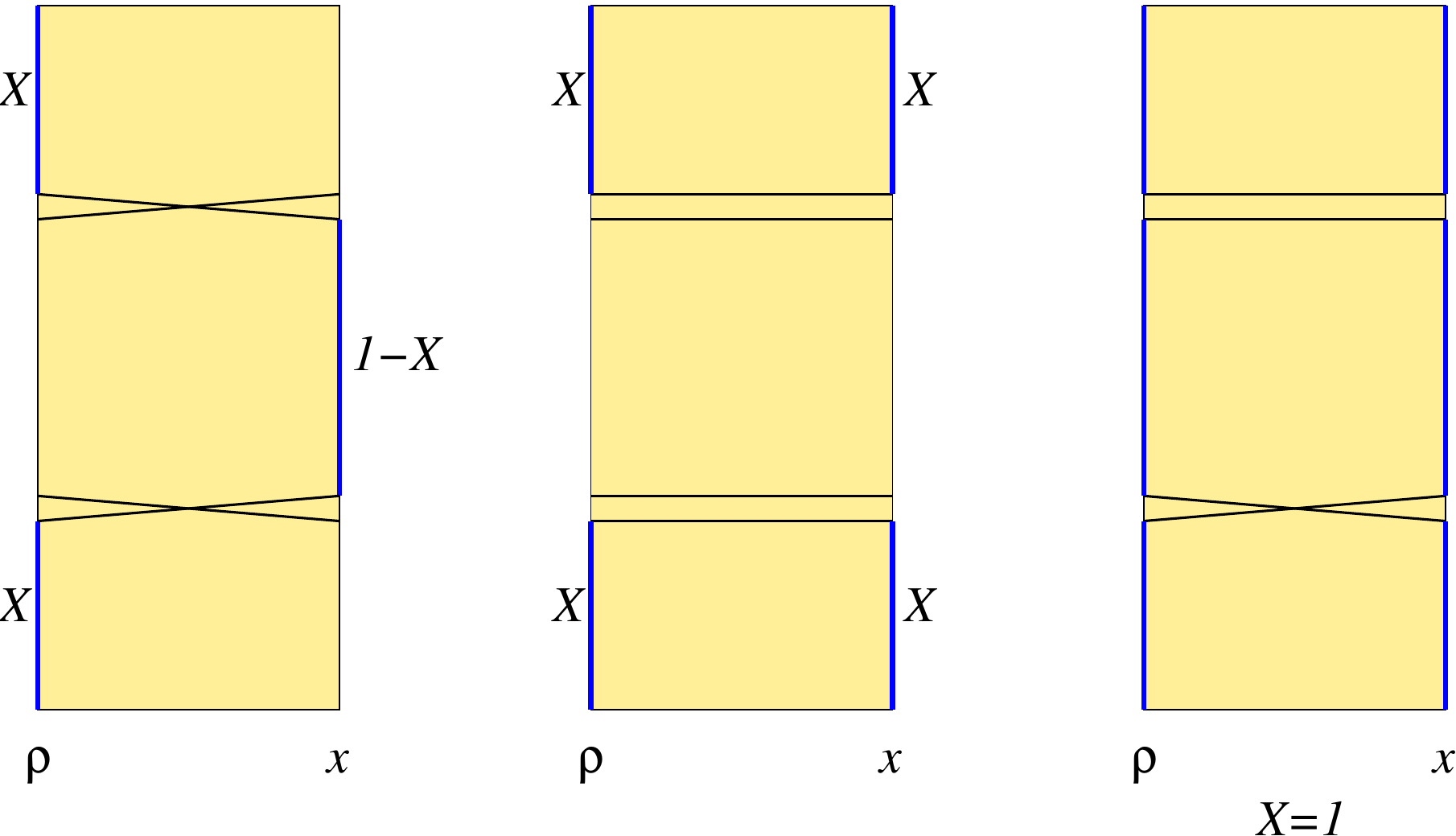}
\caption{Illustration of the possibilities for $\om_{\rho x}$ on the
  event $A_2$.  On $A_2^{\mathrm{same}}$ there are two loops, one of
  which contains $(\rho,0)$;  on $A_2^{\mathrm{mix}}$ only one.  The
  latter is thus more advantageous for long connections.  The random
  variable $X$ has mean $\tfrac23$.}
\label{fig two-connection}
\efig 
If we look at the restriction of $\om$ to the link $\rho x$ only
(i.e., at $\om_{\rho x}$) then it has two loops on $A_2^{\mathrm{same}}$
and a single loop on $A_2^{\mathrm{mix}}$.  Let us number the 
children of $x$ together with the children of $\rho$ excepting $x$ by
$i=1,\dotsc,2d-1$.  Then we have that
\be\label{ell-A2}
\ell=\textstyle\sum_{i=1}^{2d-1} \ell_i -k +
\left\{
\begin{array}{ll}
1 & \mbox{on } A_2^{\mathrm{mix}},\\
2 & \mbox{on } A_2^{\mathrm{same}},
\end{array}
\right.
\ee
where $k$ denotes the total number of 1-links at $\rho$ and at $x$.
To see this one may again imagine that the 1-links are placed last,
one at a time.  If $k=0$ then \eqref{ell-A2} holds due to our
observation about $A_2^{\mathrm{mix}}$ and $A_2^{\mathrm{same}}$
above, if $k>0$ then each link we place merges two previously disjoint
loops.

Let $\L$ denote the loop in $\om_{\rho x}$ containing $(\rho,0)$, and
let $\L_\rho=\L\cap(\{\rho\}\times[0,1])$ and
$\L_x=\L\cap(\{x\}\times[0,1])$ 
denote the parts of $\L$ at $\rho$ and at $x$,
respectively.  
For $B_m^\rho$ to happen, children of $\rho$ which link to $\L$ need
to be blocked from distance $m-1$ and children of $x$ which link to
$\L$ need to be blocked from distance $m-2$;  the remaining children
of $\rho$ and $x$ do not need to be blocked.  
In particular, on $A_2^{\mathrm{mix}}$ all children which link to
either $\rho$ or $x$ need to be blocked.  Write 
$A_2^{\mathrm{mix}}(x,k_0,k_1)$ for the event that (i) $\rho x$
supports one link of each sort, (ii) among the remaining children of
$\rho$ exactly $k_0$ support 1 link and the rest 0, and (iii)
among the children of $x$ exactly $k_1$ support 1 link and the rest 0.
Using \eqref{ell-A2} with $k=k_0+k_1$ and 
a calculation similar to \eqref{PA1B} we get
\be\begin{split}\label{PA2mix}
\EE_m[\theta^\ell\one_{B^\rho_m}
\one_{A_2^\mathrm{mix}}]
&=
\sum_{x\sim\rho}\sum_{k_0=0}^{d-1}\sum_{k_1=0}^d 
\theta^{-k_0-k_1+1} \EE_m\Big[\theta^{\sum_i\ell_i}
\one_{A_2^{\mathrm{mix}}(x,k_0,k_1)}\one_{B^\rho_{m}}\Big]\\
&=\theta Z_{m-1}^{d-1}{Z_{m-2}^d}
\tfrac{d\b^2e^{-\b}}{2}
2u(1-u)
\big(e^{-\b}(1+\tfrac\b\theta\zeta_{m-1})\big)^{d-1}
\big(e^{-\b}(1+\tfrac\b\theta\zeta_{m-2})\big)^{d}.
\end{split}
\ee
For the case of $A_2^{\mathrm{same}}$ we may start with a similar
decomposition, 
\[
\EE_m[\theta^\ell\one_{B^\rho_m}
\one_{A_2^\mathrm{same}}]
=\sum_{x\sim\rho}\sum_{k_0=0}^{d-1}\sum_{k_1=0}^d 
\theta^{-k_0-k_1+2} \EE_m\Big[\theta^{\sum_i\ell_i}
\one_{A_2^{\mathrm{same}}(x,k_0,k_1)}\one_{B^\rho_m}\Big],
\]
where $A_2^{\mathrm{same}}(x,k_0,k_1)$ is defined as 
$A_2^{\mathrm{mix}}(x,k_0,k_1)$ except for requiring the two links
supported by $\rho x$ to be of the same sort instead.
Here we may then further consider the number $j_0\in\{0,\dotsc,k_0\}$
of links with an endpoint in $\L_\rho$ as well as the number
$j_1\in\{0,\dotsc,k_1\}$
of links with an endpoint in $\L_x$.  As mentioned above, these links
need to be blocked, but the remaining do not.  Recalling that the
locations of links are uniform on $[0,1]$ this means that we obtain a
factor $|\L_\rho|$ (respectively $|\L_x|$) for each of these $j_0$
(respectively, $j_1$) links, and hence
\be\begin{split}\label{PA2same}
\EE_m[\theta^\ell\one_{B^\rho_m}
\one_{A_2^\mathrm{same}}]&
=\theta^2 Z_{m-1}^{d-1}{Z_{m-2}^d}
\tfrac{d\b^2e^{-\b}}{2}
(u^2+(1-u)^2) \\
&\quad\EE\big[
\big(e^{-\b}(1+\tfrac\b\theta\zeta_{m-1}|\L_\rho|
+\tfrac\b\theta(1-|\L_\rho|))\big)^{d-1}
\big(e^{-\b}(1+\tfrac\b\theta\zeta_{m-2}|\L_x|
+\tfrac\b\theta(1-|\L_x|))\big)^{d}
\big].
\end{split}
\ee
Here we have simply written $\EE[\cdot]$ for $\EE_m[\cdot\mid A_2]$,
this expectation is over the choice of crosses or double-bars
and over the  lengths $|\L_\rho|$ and $|\L_x|$ only. 

We note here that  the joint expectations
of $|\L_\rho|$ and $|\L_x|$ may be computed
explicitly.  Indeed, as illustrated in Fig.~\ref{fig two-connection},
there is a random variable $X$ such that
 $\L_\rho$ and $\L_x$ have respective lengths 
$X$ and $1-X$ in the case of two crosses;
$X$ and $X$ in the case of two double-bars;  and 
$|\L_\rho|=|\L_x|=X=1$ in the case of a
mixture.  One may check 
\footnote{The conditional distribution of $X$ equals that of
the length of the segment between two uniform independent
points on a circle (with circumference 1) which contains a given point.
} 
that $\EE_m[X\mid A_2^{\mathrm{same}}]=\tfrac23$.

At this point, let us mention
the following asymptotics, which will be useful several times:
if $\s=O(d^{-1})$ and  $x\in\RR$ then we have
\be\label{exp-1}
\big(e^{-\b/\theta}(1+\tfrac\b\theta-\s x \tfrac\b\theta)\big)^d=1-
\tfrac1d(\sfrac12+x\s d)+\tfrac1{d^2}\big(\sfrac13-\a+x\s d-\a x \s d
    +\tfrac12(\sfrac12+x\s d)^2\big)+O(d^{-3}).
\ee
To compute $\PP(A_2)$ we may remove the enforcement of $B^\rho_m$
in \eqref{PA2mix} and \eqref{PA2same} by
setting $\zeta_{m-1}$ and $\zeta_{m-2}$ to 1 and summing the results together,  
giving
\be\begin{split}\label{PA2}
\PP^{(\theta)}_m({A_2})&=
z_m z_{m-1}  
\tfrac{d\b^2e^{-\b/\theta}}{2\theta}
\big(e^{-\b/\theta}(1+\tfrac\b\theta)\big)^{2d-1}
\big(2u(1-u)+\theta(u^2+(1-u)^2)\big)
\\
&=
z_mz_{m-1} 
\tfrac{\theta}{2d}
\big[1-\tfrac1d\big]
\big(2u(1-u)+\theta(u^2+(1-u)^2) + O(d^{-2})\big).
\end{split}
\ee
For the last step we used \eqref{exp-1} to first order, and that
\be\label{2d-factor}
\tfrac{d\b^2e^{-\b/\theta}}{2\theta}=\tfrac{\theta}{2d}+O(d^{-2}).
\ee

\subsection{Stochastic domination}

In some estimates we will want to approximate
the complicated measure
$\PP_m^{(\theta)}(\cdot)$, which involves counting loops, by some simpler
measure.  For this we use \emph{stochastic domination}.  Let us define
$\b^+=(\b\theta)\vee(\b/\theta)$.  Also let us define
$\EE^+_m$ in the same way as $\EE_m$ but with $\b$ replaced by $\b^+$;
thus the links form independent Poisson processes with rate $\b^+$.
We say that an event $A$ is \emph{increasing} if it cannot be
destroyed by adding more links;  examples of increasing events include
$A_1^c$ and $(A_1\cup A_2)^c$ where $A_1$ and $A_2$ are as defined above.
Stochastic domination tells us that
\be\label{std}
A\mbox{ increasing}\quad\Rightarrow\quad
\PP_m^{(\theta)}(A)\leq \PP_m^+(A).
\ee
\begin{proof}[Proof of \eqref{std}]
We apply \cite[Thm.\ 1.1]{georgii-kuneth}.  Note that 
$\PP_m^{(\theta)}\ll\PP^+_m$ and the density 
$f(\om)=\tfrac{d \PP_m^{(\theta)}}{d\PP^+_m}
\propto \theta^{\ell(\om)}(\tfrac{\b}{\b^+})^{|\om|}$ where  $|\om|$ denotes
the number of links.   Let $\om'$ be obtained from $\om$ by adding a
single link.  This link either splits a loop, merges two loops, or
does not change the number of loops, hence
\[
\tfrac{f(\om')}{f(\om)}=\theta^{\ell(\om')-\ell(\om)}\tfrac{\b}{\b^+}
\in\{\tfrac{\b\theta}{\b^+},\tfrac{\b/\theta}{\b^+},\tfrac{\b}{\b^+}\}.
\]
The result follows since all three possible values are $\leq 1$.
\end{proof}
An immediate consequence of \eqref{std}
is that there is some constant $c>0$ such
that 
\be\label{std-1}
\PP_m^{(\theta)}(A_1^c\cap A_2^c)\leq c/d^2
\quad \mbox{for all } m, d\geq1.
\ee
We now deduce
some information about the asymptotic behaviour of
the numbers $z_m=e^{-d\b(1-1/\theta)}\theta Z_{m-1}^d/Z_m$.
We write
\be\label{q-def}
q=q(\theta,u)=\tfrac\theta2\big(
2u(1-u)+\theta(u^2+(1-u)^2)\big)
\ee
\be\label{r-def}
r=r(\theta,u)=2\theta u(1-u)+\tfrac12\theta^2(u^2+\tfrac43(1-u)^2)
\ee
so that $\a_\ast=1+q-r$.

\begin{proposition}\label{pf-prop}
There is 
a constant $C$ and there are 
functions $\eps^{(j)}_m(d)$,  $j\in\{1,2,3\}$, satisfying 
\be\label{eps-bds}
|\eps^{(j)}_m(d)|\leq C/d^2 
\mbox{ for all $m,d\geq 1$},
\ee
such that
\be\label{pfs1}
z_m=1-\tfrac1d(q-\sfrac12)+\eps^{(1)}_m(d)
\ee
and
\be\label{pfs2}
z_m
\big(1+\tfrac1d(q-\sfrac12)+\eps^{(2)}_m(d)\big)
=1-\eps^{(3)}_m(d).
\ee
\end{proposition}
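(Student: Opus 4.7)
The plan is to use the partition-of-unity identity
\[
1 = \PP^{(\theta)}_m(A_1) + \PP^{(\theta)}_m(A_2) + \PP^{(\theta)}_m((A_1\cup A_2)^c)
\]
together with the explicit formulas \eqref{PA1}, \eqref{PA2} and the stochastic-domination bound \eqref{std-1} to derive a relation for $z_m$, and then to solve it asymptotically. First I would substitute into the identity: from \eqref{exp-1} with $\s x=0$ one has $(e^{-\b/\theta}(1+\b/\theta))^d = 1 - \tfrac{1}{2d} + O(d^{-2})$, so \eqref{PA1} gives $\PP^{(\theta)}_m(A_1) = z_m(1 - \tfrac{1}{2d} + O(d^{-2}))$, while \eqref{PA2} together with \eqref{2d-factor} and the definition of $q$ gives $\PP^{(\theta)}_m(A_2) = z_m z_{m-1}\tfrac{q}{d}(1 + O(d^{-1}))$. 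Combined with \eqref{std-1}, the identity becomes
\[
1 = z_m\Bigl(1 - \tfrac{1}{2d} + \tfrac{z_{m-1}q}{d}\Bigr) + O(d^{-2}),
\]
with the $O(\cdot)$ uniform in $m$ provided $z_m$ and $z_{m-1}$ remain bounded.

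Second, I would establish a uniform two-sided bound $c_1 \leq z_m \leq c_2$ valid for every $m\geq 1$ and every $d$ sufficiently large. The upper bound $z_m \leq 2$ is immediate from $\PP_m^{(\theta)}(A_1)\leq 1$ combined with the asymptotics of the factor in \eqref{PA1}. For the lower bound one uses that $\PP(A_1)+\PP(A_2) \geq 1 - c/d^2$ by \eqref{std-1}, so if $z_m$ were too small both terms would be $O(1/d)$, a contradiction. These bounds are uniform in $m$ because \eqref{std-1} and \eqref{exp-1} are. With boundedness in hand, I rearrange the identity as $z_m = [1+O(d^{-2})]/[1 - \tfrac{1}{2d} + \tfrac{z_{m-1}q}{d} + O(d^{-2})]$ and Taylor-expand to get $z_m = 1 - (z_{m-1}q - \tfrac12)/d + O(d^{-2})$. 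A one-step bootstrap substitutes the cheap bound $z_{m-1} = 1 + O(1/d)$ (obtained by reading the same equation at level $m-1$) to yield $z_m = 1 - (q - \tfrac12)/d + O(d^{-2})$, establishing \eqref{pfs1}.

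Equation \eqref{pfs2} then follows from \eqref{pfs1} by elementary algebra: setting $Y = (q-\tfrac12)/d$ and choosing $\eps^{(2)}_m(d) \equiv 0$, one computes
\[
z_m(1 + Y) = (1 - Y + \eps^{(1)}_m)(1 + Y) = 1 - Y^2 + \eps^{(1)}_m(1 + Y),
\]
so taking $\eps^{(3)}_m(d) = Y^2 - \eps^{(1)}_m(1+Y)$ gives the stated form with $|\eps^{(3)}_m| \leq C'/d^2$. The main obstacle is ensuring that all error estimates are genuinely uniform in $m$, not merely asymptotic for fixed $m$; this rests on the $m$-uniformity of \eqref{std-1} and \eqref{exp-1}, and on the fact that the boundedness of $z_{m-1}$ needed in the bootstrap is supplied by the same argument applied one level down. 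A minor subtlety is the base case, where \eqref{PA2} formally refers to $Z_0$ or $Z_{-1}$; for small $m$, or for small $d$, the claimed bound $|\eps^{(j)}_m|\leq C/d^2$ holds trivially after enlarging $C$, since $z_m$ is then a bounded quantity depending on finitely many parameters.
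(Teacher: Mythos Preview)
Your approach is essentially the paper's: both plug \eqref{PA1}, \eqref{PA2} and \eqref{std-1} into the identity $1=\PP^{(\theta)}_m(A_1)+\PP^{(\theta)}_m(A_2)+\PP^{(\theta)}_m((A_1\cup A_2)^c)$ to obtain a relation between $z_m$ and $z_{m-1}$, and then use information about $z_{m-1}$ (the paper by induction on $m$, you by a crude-bound-then-refine bootstrap, which amounts to the same thing) to close it. One small correction: your claim that the base case $m=1$ ``holds trivially after enlarging $C$'' is not quite right, since $z_1$ still depends on $d$ and requires the asymptotic; the paper treats $m=1$ by a direct computation of $Z_1$ (avoiding the undefined $z_0$), which immediately yields \eqref{pfs1}.
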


\begin{proof}
Note that \eqref{pfs1} and \eqref{pfs2} are equivalent, 
hence one may proceed by induction on $m$, 
proving \eqref{pfs2} with the induction hypothesis provided by
\eqref{pfs1}.   For the base case $m=1$ one may establish \eqref{pfs1}
directly, splitting into the cases $A_1$, $A_2$ and $(A_1\cup A_2)^c$
to get
\[
Z_1=\theta^{d+1} e^{-d\b(1-1/\theta)}\Big[
\big(e^{-\b/\theta}(1+\tfrac\b\theta)\big)^d+
q(\theta,u) \tfrac{d\b^2e^{-\b/\theta}}{\theta^2}
\big(e^{-\b/\theta}(1+\tfrac\b\theta)\big)^{d-1}
+\eps_1(d)
\Big],
\]
where $0\leq \eps_1(d)\leq e^{d\b(1-1/\theta)}\PP_1(A_1^c\cap A_2^c)$
satisfies \eqref{eps-bds}. 

For $m>1$,
write $\eps^{(3)}_m(d)=\PP^{(\theta)}_m(A_1^c\cap A_2^c)$, this
satisfies \eqref{eps-bds} by \eqref{std-1}.
From the expressions \eqref{PA1} and \eqref{PA2} we have 
\[\begin{split}
1&= \PP^{(\theta)}_m(A_1)+\PP^{(\theta)}_m(A_2)+
\PP^{(\theta)}_m(A_1^c\cap A_2^c)=
\\
&=
z_m\big(e^{-\b/\theta}(1+\tfrac\b\theta)\big)^d+
z_m z_{m-1} 
\tfrac{d\b^2e^{-\b/\theta}}{\theta^2}
\big(e^{-\b/\theta}(1+\tfrac\b\theta)\big)^{2d-1}
q(\theta,u) +\eps_m^{(3)}(d)
\end{split}\]
Hence, using the asymptotics \eqref{exp-1} and \eqref{2d-factor},
\[
1-\eps^{(3)}_m(d)=
z_m \Big[ 1-\tfrac1{2d}+
z_{m-1} \tfrac1d\big(1-\tfrac1d\big) q +\eps^{(4)}(d)
\Big]
\]
for a function $\eps^{(4)}(d)$ not depending on $m$ but otherwise
satisfying the bounds \eqref{eps-bds}.
Using the induction hypothesis we get
\[
1-\eps^{(3)}_m(d)=
z_m \Big[ 1+\tfrac1{d}(q-\tfrac12)
+\eps_m^{(2)}(d) \Big],
\]
where
\[
\eps^{(2)}_m(d) = \eps^{(4)}(d) - \tfrac q{d^2} \big( q + \tfrac12 \bigr) + \tfrac q{d^3} \big( q - \tfrac12 \bigr) + \tfrac qd \bigl( 1 - \tfrac1d \bigr) \eps^{(1)}_{m-1}(d)
\]
is easily seen to satisfy \eqref{eps-bds}.
\end{proof}

\begin{remark}
From the proposition it follows that
\be\label{two-pfs}
z_mz_{m-1} = 1+O(d^{-1})
\ee
where the $O(\cdot)$ is uniform in $m$.
\end{remark}

We now turn to the details of the proof of Prop.~\ref{ineq-lem}.

\subsection{Proof of the lower bound \eqref{large-ineq}}

We have from the definition 
$\s_m=1-\PP^{(\theta)}_m(B_m^\rho)$ that
\be\label{sigma-lb}
\s_m\geq 
\PP^{(\theta)}_m(A_1)
-\PP^{(\theta)}_m(B_m^\rho\cap A_1)
+\PP^{(\theta)}_m(A_2)
-\PP^{(\theta)}_m(B_m^\rho\cap A_2),
\ee
where we have simply bounded the remaining 
difference involving the event $(A_1\cup A_2)^c$
from below by 0.
Consider first the terms involving $A_1$.
From~\eqref{PA1} and~\eqref{PA1B}, bounding
$\s_{m-1}\geq\tilde\s_{m-1}$,  
and using the asymptotics~\eqref{exp-1} as well as
the estimates Prop.~\ref{pf-prop} on $z_m$ we get 
\be\label{PA1-diff}
\begin{split}
&\PP^{(\theta)}_m(A_1)-\PP^{(\theta)}_m(B_m^\rho\cap A_1)\geq
z_m
\big(\tilde\s_{m-1}-\tfrac{\tilde\s_{m-1}}{d}(\sfrac32-\a)-
\tfrac12\tilde\s_{m-1}^2+ O(d^{-3})\big)\\
&\qquad =\big(1-\tfrac1d (q-\sfrac12)\big)
\big(\tilde\s_{m-1}-\tfrac{\tilde\s_{m-1}}{d}(\sfrac32-\a)-
\tfrac12\tilde\s_{m-1}^2\big) + O(d^{-3})\\
&\qquad =
\tilde\s_{m-1}+\tfrac{\tilde\s_{m-1}}{d}(\a-q-1)-
\tfrac12\tilde\s_{m-1}^2+ O(d^{-3}).
\end{split}\ee

Now consider the terms involving $A_2$.
Using that $\zeta_{m-1},\zeta_{m-2}\leq 1-\tilde\s_{m-1}$, as well as
the asymptotics \eqref{exp-1}
to order $d^{-1}$, 
we deduce from \eqref{PA2mix} that
\be\begin{split}\label{PA2mixB}
\PP^{(\theta)}_m({B_m^\rho} \cap
{A_2^\mathrm{mix}})&\leq
z_m z_{m-1} \tfrac{d\b^2 e^{-\b/\theta}}{2\theta} 2u(1-u) 
\big(e^{-\b/\theta}(1+\tfrac\b\theta-\tilde\s_{m-1}\tfrac\b\theta)
\big)^{2d-1}\\
&=z_m z_{m-1} \frac{\theta}{2d}
\Big[\big(1-\tfrac1d\big)2u(1-u)-
\tilde\s_{m-1}4u(1-u)+ O(d^{-2})
\Big]
\end{split}
\ee
and from \eqref{PA2same} that
\be\begin{split}\label{PA2sameB}
\PP^{(\theta)}_m({B_m^\rho} \cap
{A_2^\mathrm{same}})&\leq
z_m z_{m-1} \tfrac{d\b^2 e^{-\b/\theta}}{2}
(u^2+(1-u)^2) \\
&\qquad \EE\Big[
\big(e^{-\b/\theta}(1+\tfrac\b\theta - 
\tilde\s_{m-1}|\L_\rho| \tfrac\b\theta
)\big)^{d-1}
\big(e^{-\b/\theta}(1+\tfrac\b\theta 
-\tilde\s_{m-1}|\L_x|\tfrac\b\theta)\big)^{d}
\Big]\\
&=
z_m z_{m-1}\frac{\theta^2}{2d} (u^2+(1-u)^2)
\Big[\big(1-\tfrac1d\big) - \tilde\s_{m-1} 
\EE\big(|\L_\rho|+|\L_x|\big)+O(d^{-2})
\Big]\\
&= z_m z_{m-1} \frac{\theta^2}{2d}
\Big[\big(1-\tfrac1d\big)(u^2+(1-u)^2)-
\tilde\s_{m-1}(u^2+\tfrac43(1-u)^2)+ O(d^{-2})
\Big].
\end{split}
\ee
Here we used the properties of $|\L_\rho|$ and $|\L_x|$ stated below \eqref{PA2same}
(eee also Fig.~\ref{fig two-connection}).
Using also \eqref{PA2} and \eqref{two-pfs} we get 
\[\begin{split}
&\PP^{(\theta)}_m(A_2)-\PP^{(\theta)}_m(B_m^\rho\cap A_2)\\
&\quad\geq
z_m z_{m-1} \frac{\theta}{2d}
\Bigg\{
2u(1-u)\Big(
\big[1-\tfrac1d\big]-\big[1-\tfrac1d-2\tilde\s_{m-1}\big]
\Big)\\
&\qquad\qquad+\theta (u^2+(1-u)^2)
\Big(
\big[1-\tfrac1d\big]-
\big[1-\tfrac1d-\tilde\s_{m-1}\frac{u^2+\tfrac43(1-u)^2}{u^2+(1-u)^2}\big]
\Big)+O(d^{-2})
\Bigg\}\\
&\quad=r(\theta,u) \tfrac{\tilde\s_{m-1}}{d} + O(d^{-3}),
\end{split}\]
where $r$ is defined in \eqref{r-def}.
Putting this together in~\eqref{sigma-lb} gives
\[
\s_m\geq 
\tilde\s_{m-1}+\tfrac{\tilde\s_{m-1}}{d}
(\a-[1+q-r])-
\tfrac12\tilde\s_{m-1}^2+ O(d^{-3}).
\]
Since $\a_\ast=1+q-r$ this gives~\eqref{large-ineq}.
\qed

\subsection{Proof of the upper bound \eqref{not-large-ineq}}

Write $\S_m^\rho$ for the complement of $B_m^\rho$, so
that $\s_m=\PP^{(\theta)}_m(\S_m^\rho)$.
Clearly 
\[
\s_m=\PP^{(\theta)}_m(A_1\cap \S_m^\rho)+
\PP^{(\theta)}_m(A_2\cap \S_m^\rho)+
\PP^{(\theta)}_m(A_1^c\cap A_2^c \cap \S_m^\rho).
\]
The following will be proved at the end of this section:
\begin{proposition}\label{tree-bound}
For all $d$ large enough there is a constant $C$ such that 
\[
\PP^{(\theta)}_m(A_1^c\cap A_2^c \cap \S_m^\rho)
\leq \frac C{d^2}  (\s_{m-1}\vee\s_{m-2}).
\]
\end{proposition}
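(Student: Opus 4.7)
The plan is to split the event $A_1^c \cap A_2^c$ into a short list of local configurations near $\rho$, verify that each has $\PP^{(\theta)}_m$-probability $O(d^{-2})$, and for each one bound the conditional probability of $\S_m^\rho$ by a constant multiple of $\s_{m-1} \vee \s_{m-2}$.

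First I would classify the dominant sub-events of $A_1^c \cap A_2^c$. Up to configurations already of order $o(d^{-2})$ (absorbed by the stochastic-domination bound \eqref{std-1}), the cases to consider are: (a) some edge $\rho x$ supports $\geq 3$ links; (b) two distinct children $x_1, x_2$ of $\rho$ each support $\geq 2$ links to $\rho$; (c) some child $x$ of $\rho$ supports exactly two links to $\rho$ while some grandchild $y$ (a child of $x$) supports $\geq 2$ links to $x$. A Poisson computation with $\b = O(1/d)$, together with the combinatorial factors $d$, $\binom{d}{2}$, $d^2$, gives each case $\PP^+_m$-probability $O(d^{-2})$ and hence, by \eqref{std}, $\PP^{(\theta)}_m$-probability $O(d^{-2})$ as well.

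Next, for each case I would condition on the local configuration in $T_2$ and analyse the remaining subtree randomness in the style of \eqref{PA1B}--\eqref{PA2mix}. Once the local links are fixed, the loop through $(\rho,0)$ visits a bounded (and explicit) set of vertices of $T_2$, and $\S_m^\rho$ forces at least one of the subtrees hanging below these visited vertices to contain a connection to depth $m-1$ or $m-2$. The conditional expectation factorises as a product of partition-function ratios of the form $Z_{m-1}^{d-a} Z_{m-2}^{b}/Z_m$ multiplied by a bounded local combinatorial weight; replacing the numerator $Z_{m-j}$ for the distinguished subtree by $\EE_{m-j}[\theta^\ell \one_{\S^\rho_{m-j}}] = \s_{m-j} Z_{m-j}$ and taking a union bound over the $O(1)$ visited subtrees produces the desired factor $\s_{m-1} \vee \s_{m-2}$. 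The remaining partition-function ratios are controlled uniformly in $m$ by the estimate $z_m = 1 + O(d^{-1})$ from Prop.~\ref{pf-prop}. Contributions from loops entering subtrees rooted strictly below depth $2$ carry additional explicit $\b = O(1/d)$ factors for each extra link required, which more than compensate for any deeper $\s_{m-k}$ that appears.

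The main obstacle, as flagged in the discussion preceding this subsection, is precisely the loop weight $\theta^{\ell(\om)}$: it couples each subtree to the local region and so invalidates the naive independent-subtrees argument that sufficed in \cite{BU-tree} when $\theta = 1$. The resolution is the explicit factorisation sketched above --- because the local region in each of (a)--(c) has bounded size and an explicit loop structure, the conditional measure on the subtrees decomposes into a product of $Z_{m-j}$-type partition functions modulated by a local weight of the form $\theta^{-k+\mathrm{const}}$, so the $\theta^\ell$ coupling plays out entirely within the local region and does not propagate into the subtree estimates. Residual configurations outside (a)--(c) are already $o(d^{-2})$ by \eqref{std-1} and therefore need no subtree factor at all.
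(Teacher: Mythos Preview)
Your case split (a)--(c) does exhaust $A_1^c\cap A_2^c$, and the factorisation of $\theta^\ell$ into a local piece times subtree partition functions is the right idea. But there is a genuine gap in how you handle excursions below depth~$2$.

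The assertion that ``the loop through $(\rho,0)$ visits a bounded (and explicit) set of vertices of $T_2$'' is false as stated: in case~(c) nothing prevents the grandchild $y$ from itself having a child with $\geq 2$ links, and so on, so the loop can reach arbitrary depth $k$ before exiting across a $1$-link edge. When it exits at depth $k$ you pick up a factor $\s_{m-k-1}$, not $\s_{m-1}$ or $\s_{m-2}$. Your remark that the extra $\b=O(1/d)$ factors ``more than compensate'' for this is the crux, and it is \emph{not} justified: a priori nothing bounds $\s_{m-k-1}$ in terms of $\s_{m-2}$, so a term of size $(C/d)^{k}\,\s_{m-k-1}$ need not be $\leq C' d^{-2}\,\s_{m-2}$. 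The missing ingredient is the comparison
\[
\s_{m-1}\leq c_4\,\s_m \qquad\text{uniformly in $m$ and $d$,}
\]
which the paper proves separately (by forcing $A_1$ and a connection through one child) and then iterates to get $\s_{m-k-1}\leq c_4^{k-1}\s_{m-2}$, so that the sum over $k$ becomes geometric in $c_4/d$.

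A second, related error is the last sentence: residual configurations of probability $o(d^{-2})$ \emph{do} need a $\s$ factor. The target bound is $Cd^{-2}(\s_{m-1}\vee\s_{m-2})$, and in the application (establishing $\s_m\to 0$) the right-hand side can be arbitrarily small; an additive $o(d^{-2})$ term with no $\s$ would destroy the inequality.

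For comparison, the paper avoids the finite case analysis altogether: it defines the random subtree $\check T\ni\rho$ grown along edges with $\geq 2$ links, proves the decoupling identity
\[
\PP^{(\theta)}_m(A_1^c\cap A_2^c\cap \S_m^\rho)\leq \sum_{k=0}^{m}\s_{m-k-1}\,\EE_m^{(\theta)}\big[|\cE_k|\,\one_{A_1^c\cap A_2^c}\big],
\]
bounds $\EE_m^{(\theta)}[|\cE_k|\one_{A_1^c\cap A_2^c}]$ by $(C/d)^k$ via stochastic domination, and then invokes $\s_{m-1}\leq c_4\s_m$ to sum. Your approach would work once patched with the same comparison lemma, but as written the deep-excursion and residual terms are not under control.
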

\noindent
Before proving this we show how to deduce \eqref{not-large-ineq}.
We have by taking the difference of the expressions~\eqref{PA1}
and~\eqref{PA1B} that 
\be\begin{split}\label{PA1S}
\PP^{(\theta)}_m(A_1\cap \S_m^\rho)&=
z_m
\Big\{\big(e^{-\b/\theta}(1+\tfrac\b\theta)\big)^d-
\big(e^{-\b/\theta}(1+\tfrac\b\theta-\tfrac\b\theta\s_{m-1})\big)^d
\Big\}\\
&=z_m
\big(e^{-\b/\theta}(1+\tfrac\b\theta)\big)^d
\Big\{1-
\Big(1-\frac{\tfrac\b\theta\s_{m-1}}{1+\b/\theta}\Big)^d
\Big\}\\
&\leq z_m
\big(e^{-\b/\theta}(1+\tfrac\b\theta)\big)^d
\tfrac{d\b}{\theta}(1+\tfrac\b\theta)^{-1}\s_{m-1}.
\end{split}\ee
In the last step we used the concavity of the function
$f(x)=1-(1-x)^d$ to bound $f(x)\leq x f'(0)$.

Similarly using~\eqref{PA2mix} and concavity of
$f(x,y)=1-(1-x)^{d-1}(1-y)^d$ (for $d\geq 3$),
\be\begin{split}\label{PA2mixS}
\PP^{(\theta)}_m({\S_m^\rho} \cap
{A_2^\mathrm{mix}})&=
z_m z_{m-1}
\tfrac{d\b^2e^{-\b/\theta}}{2}
\tfrac2\theta u(1-u)
\big(e^{-\b/\theta}(1+\tfrac\b\theta)\big)^{2d-1}
\Big\{
1-\Big(1-\frac{\tfrac\b\theta\s_{m-1}}{1+\tfrac\b\theta}\Big)^{d-1}
    \Big(1-\frac{\tfrac\b\theta\s_{m-2}}{1+\tfrac\b\theta}\Big)^{d}
\Big\}\\
&\leq z_m z_{m-1}
\tfrac{d\b^2e^{-\b/\theta}}{2}
\tfrac2\theta u(1-u)
\big(e^{-\b/\theta}(1+\tfrac\b\theta)\big)^{2d-1}
\tfrac{\b}{\theta}(1+\tfrac\b\theta)^{-1}
\{(d-1)\s_{m-1}+d\s_{m-2}\}\\
&\leq z_m z_{m-1}
\tfrac{d\b^2e^{-\b/\theta}}{2}
\big(e^{-\b/\theta}(1+\tfrac\b\theta)\big)^{2d-1}
\tfrac{d\b}{\theta}(1+\tfrac\b\theta)^{-1}
\tfrac2\theta u(1-u) (\s_{m-1}+\s_{m-2}).
\end{split}\ee
The same argument applied to~\eqref{PA2same} gives
(with the notation $\EE$ used there) 
\be\begin{split}\label{PA2sameS}
\PP^{(\theta)}_m({\S_m^\rho} \cap
{A_2^\mathrm{same}})&=
z_m z_{m-1} \tfrac{d\b^2e^{-\b/\theta}}{2}
 (u^2+(1-u)^2)
\big(e^{-\b/\theta}(1+\tfrac\b\theta)\big)^{2d-1}\\
&\qquad \cdot
\EE\Big[
1-\Big(1-\frac{\tfrac\b\theta\s_{m-1}|\L_\rho|}{1+\tfrac\b\theta}\Big)^{d-1}
    \Big(1-\frac{\tfrac\b\theta\s_{m-2}|\L_x|}{1+\tfrac\b\theta}\Big)^{d}
\Big]\\
&\leq z_m z_{m-1}
\tfrac{d\b^2e^{-\b/\theta}}{2}
(u^2+(1-u)^2)
\big(e^{-\b/\theta}(1+\tfrac\b\theta)\big)^{2d-1}
\tfrac{\b}{\theta}(1+\tfrac\b\theta)^{-1}\\
&\qquad\cdot
\{(d-1)\s_{m-1}\EE|\L_\rho|
+d\s_{m-2}\EE|\L_x|\}\\
&\leq  z_m z_{m-1}
\tfrac{d\b^2e^{-\b/\theta}}{2}
\big(e^{-\b/\theta}(1+\tfrac\b\theta)\big)^{2d-1}
\tfrac{d\b}{\theta}(1+\tfrac\b\theta)^{-1}\\
&\qquad\cdot 
\{\s_{m-1}\tfrac23 (u^2+(1-u)^2)
+\s_{m-2}(\tfrac13 u^2+\tfrac23 (1-u)^2)\}
\end{split}\ee

Using Prop.~\ref{pf-prop} to estimate $z_m$,
the asymptotics \eqref{exp-1}, as well as 
$\tfrac{d\b}{\theta}(1+\tfrac\b\theta)^{-1}=1+\tfrac{\a-1}{d}+O(d^{-2})$
we see that the right-hand side of \eqref{PA1S} satisfies
\[
z_m
\big(e^{-\b/\theta}(1+\tfrac\b\theta)\big)^d
\tfrac{d\b}{\theta}(1+\tfrac\b\theta)^{-1}\s_{m-1}
=\Big(1+\frac{\a-(1+q)}{d}+O(d^{-2})\Big)\s_{m-1},
\]
where $q=q(\theta,u)$ was defined in \eqref{q-def}.
Similarly, using \eqref{2d-factor} and \eqref{two-pfs},
in the right-hand-sides of \eqref{PA2mixS} and \eqref{PA2sameS} 
we have the factors
\[
z_m z_{m-1} 
\tfrac{d\b^2e^{-\b/\theta}}{2}
\big(e^{-\b/\theta}(1+\tfrac\b\theta)\big)^{2d-1}
\tfrac{d\b}{\theta}(1+\tfrac\b\theta)^{-1}
=\tfrac{\theta^2}{2d}(1+O(d^{-1})).
\]
Hence, bounding also $\s_{m-1}$ and $\s_{m-2}$ by their maximum, 
we have that
\[\begin{split}
\s_m&\leq (\s_{m-1}\vee\s_{m-2})\Big[1+\frac{\a-(1+q)}{d}
+ \frac{\theta^2}{2d}\Big( \tfrac4\theta u(1-u)+\tfrac23 (u^2+(1-u)^2)
          +\tfrac13 u^2+\tfrac23 (1-u)^2\Big)+O(d^{-2})\Big]\\
&\qquad\qquad+\PP^{(\theta)}_m(A_1^c\cap A_2^c \cap \S_m^\rho)\\
&= (\s_{m-1}\vee\s_{m-2})\Big[1+\frac{\a-(1+q-r)}{d}+O(d^{-2})
\Big]+\PP^{(\theta)}_m(A_1^c\cap A_2^c \cap \S_m^\rho),
\end{split}\]
where $r=r(\theta,u)$ was defined in \eqref{r-def}.
In the above, all $O(d^{-2})$ terms are uniform in $m$.
Since $1+q-r=\a_\ast$ we see
that \eqref{not-large-ineq} follows once we prove
Prop.~\ref{tree-bound}. 

In the following argument we will examine the subtree 
$\check T$ of $T_m$ which
contains the root and is spanned by edges supporting at least two links.
In $\check T$, the loop-structure is very complicated and we will
not attempt to keep track of it.  Instead we use that $\check T$ is
likely to be small, and that a loop exiting it must do so across an
edge supporting exactly one link, which is a simpler situation to
analyze.  
Roughly speaking, the enforcement of the event 
$A_1^c\cap A_2^c$ will give rise to the factor $d^{-2}$,
and the requirement that the loop exits $\check T$ will give a factor
$\s_{m-k}$ for some $k\geq1$, which can then be bounded in terms of
$\s_{m-1}\vee\s_{m-2}$. 
The details are quite technical.

\begin{proof}[Proof of Prop.~\ref{tree-bound}]
We begin by defining $\check T$ carefully:  we let $\check T$ be the
(random) subtree of $T_m$ containing  
\begin{enumerate}
\item the root $\rho$
\item any vertex in generation 1 with $\geq2$ links to $\rho$,
\item in general, any vertex in generation $k$ with $\geq2$ links to
  some vertex of $\check T$ in generation $k-1$.
\end{enumerate}
Note that $A_1^c\cap A_2^c$ is precisely the event that $\check T$ has
at least two edges.
Let $V_k(\check T)$ denote the set of vertices in $\check T$ in
generation $k$.
For $x$ a vertex of $\check T$, $x\not\in V_m(\check T)$, 
let $d_x$ denote its number of
descendants \emph{not in $\check T$}.  Thus $x$ has $d_x$ outgoing edges
carrying only 0 or 1 links of $\om$.  
For $0\leq k\leq m-1$ we let $\cE_k$ denote the set of outgoing
edges from generation $k$ (to generation $k+1$) which carry 
precisely 1 link.

Note that if the loop of $(\rho,0)$ reaches generation $m$ then either
it reaches generation $m$ within $\check T$, or it passes some link of
$\cup_{k=0}^{m-1} \cE_k$.  Let us by convention set $\s_{-1}=1$ and
$|\cE_{m}|=|V_m(\check T)|$.  We claim that
\be\label{tree-sum}
\PP^{(\theta)}_m(A_1^c\cap A_2^c \cap \S_m^\rho)
\leq \sum_{k=0}^{m} \s_{m-k-1} 
\EE_m^{(\theta)}[|\cE_k| \one_{A_1^c\cap A_2^c}].
\ee
Intuitively, this is because if the loop exits $\check T$ 
through some edge in
$\cE_k$, then it has distance $m-k-1$ left to go to reach the 
$m^{\text{\tiny th}}$
generation of $T_m$.   A detailed justification of \eqref{tree-sum}
requires dealing with the dependencies caused by the factor
$\theta^\ell$.

To do this, let us introduce the following notation.  First, let
$\check \om$ denote the restriction of $\om$ to  $\check T$.  Next,
let $\partial^+\check T$ denote the set of vertices 
$y\in T_m\sm\check T$ whose parent belongs to $\check T$,
and write $\om_y$ for the restriction of $\om$ to the subtree rooted
at $y$.  For simplicity, in the rest of this proof we simply write
$\EE$ for $\EE_m$.  We will make use of the fact that, given
$\check\om$, the random collections $(\cE_j)_{j=0}^{m-1}$ and
$(\om_y)_{y\in\partial^+\check T}$ are conditionally independent
under $\EE$.
This implies that for three functions 
\be\label{3fns}
F_1(\check \om),\quad F_2(\check\om,(\cE_j)_{j=0}^{m-1}),
\quad F_3(\check\om,(\om_y)_{y\in\partial^+\check T})
\ee
we have 
\begin{multline}\label{3fns-e}
\EE\big[F_1(\check \om) F_2(\check\om,(\cE_j)_{j=0}^{m-1})
F_3(\check\om,(\om_y)_{y\in\partial^+\check T})\big]\\
=\EE\big[F_1(\check\om)
\EE[F_2(\check\om,(\cE_j)_{j=0}^{m-1})\mid\check\om]
\EE[F_3(\check\om,(\om_y)_{y\in\partial^+\check T})\mid\check\om]\big].
\end{multline}
Note that we have the decomposition (similar to \eqref{ell-A1})
\[
\ell=\check\ell+\sum_{j=0}^{m-1}\Big[
\sum_{x\in V_j(\check T)} \sum_{i=1}^{d_x} \ell_i^{(x)} 
-|\cE_j|
\Big],
\]
where $\check \ell$ denotes the number of loops in the configuration
 $\check \om$, and $\ell_i^{(x)}$ denotes the number of
loops in the subtree rooted at the $i^{\text{\tiny th}}$ descendant of $x$ not
belonging to $\check T$ (in some numbering of
these descendants).  Hence 
\be\label{th-prod}
\theta^\ell=\theta^{\check\ell} 
\Big(\prod_{j=0}^{m-1} \theta^{-|\cE_j|}\Big)
\Big(
\prod_{j=0}^{m-1} \prod_{x\in V_j(\check T)} \prod_{i=1}^{d_x}
\theta^{\ell_i^{(x)}}\Big)
\ee
is a factorization into three functions as in \eqref{3fns}.
Turning to \eqref{tree-sum}, by considering the possibilities that
either $\check T$ reaches generation $m$ (meaning 
$V_m(\check T)\neq\es$) or that loop of $(\rho,0)$ passes some edge 
$e\in\cup_{k=0}^{m-1}\cE_k$, we have 
\be\label{tree-prod-1}
\PP^{(\theta)}_m(A_1^c\cap A_2^c \cap \S_m^\rho)\leq
\EE^{(\theta)}_m[|V_m(\check T)|\one_{A_1^c\cap A_2^c}]+
\sum_{e} \sum_{k=0}^{m-1}
\PP^{(\theta)}_m(A_1^c\cap A_2^c\cap \{e\in \cE_k\}
\cap \{(e^+,t^+)\lra m\}], 
\ee
where the first sum is over all edges $e$ of $T_m$, and 
$\{(e^+,t^+)\lra m\}$ denotes the event that the further (from $\rho$)
endpoint $(e^+,t^+)$ of the unique link at $e$ lies in a loop of
$\om_{e^+}$ reaching the $m^{\text{\tiny th}}$ generation of $T_m$.   Applying
\eqref{3fns-e} and \eqref{th-prod} we have
\begin{multline*}
\PP^{(\theta)}_m(A_1^c\cap A_2^c\cap \{e\in \cE_k\}
\cap \{(e^+,t^+)\lra m\}]\\=
\frac1{Z_m} \EE\Big[\one_{A_1^c\cap A_2^c} \theta^{\check\ell}
\EE\big[\one\{e\in\cE_k\}\prod_{j=0}^{m-1} \theta^{-|\cE_j|}\mid \check T\big]
\s_{m-k-1}
\prod_{j=0}^{m-1} \prod_{x\in V_j(\check T)} Z_{m-j-1}^{d_x} 
\Big].
\end{multline*}
Taking out the factor $\s_{m-k-1}$, applying \eqref{3fns-e} again in
reverse, and putting back into \eqref{tree-prod-1}, we obtain
\eqref{tree-sum}.  

We proceed by bounding the expectations
\[
\EE_m^{(\theta)}[|\cE_k| \one_{A_1^c\cap A_2^c}]=
\frac1{Z_m} \EE[\theta^\ell|\cE_k| \one_{A_1^c\cap A_2^c}].
\]
Arguing as above we get:
\[
\EE[\theta^\ell|\cE_k| \one_{A_1^c\cap A_2^c}]=
\EE\Big[\one_{A_1^c\cap A_2^c} \theta^{\check\ell}
\prod_{j=0}^{m-1} \prod_{x\in V_j(\check T)} Z_{m-j-1}^{d_x}
\EE\big[|\cE_k|\prod_{j=0}^{m-1} \theta^{-|\cE_j|}\mid \check T\big] 
\Big].
\]
The $|\cE_j|$ are conditionally independent given $\check T$, hence
\[
\EE\big[|\cE_k|\prod_{j=0}^{m-1} \theta^{-|\cE_j|}\mid \check T\big] 
=\EE\big[|\cE_k| \theta^{-|\cE_k|}\mid \check T\big] 
\prod_{j\neq k} \EE\big[\theta^{-|\cE_j|}\mid \check T\big].
\]
Let $p_i=e^{-\b}\b^i/i!$ denote the probabilities of a
Poisson($\beta$) random variable.
Direct computation gives 
\[
\EE\big[\theta^{-|\cE_k|}\mid \check T\big] =
\prod_{x\in V_k(\check T)} 
\Big(\frac{p_0+p_1/\theta}{p_0+p_1}\Big)^{d_x}
\]
and (e.g.\ by differentiating the previous expression)
\[
\EE\big[|\cE_k| \theta^{-|\cE_k|}\mid \check T\big] =
\frac{p_1/\theta}{p_0+p_1/\theta}
\Big(\sum_{x\in V_k(\check T)}d_x\Big)
\prod_{x\in V_k(\check T)} 
\Big(\frac{p_0+p_1/\theta}{p_0+p_1}\Big)^{d-d_x}.
\]
Hence
\[
\frac{\EE\big[|\cE_k| \theta^{-|\cE_k|}\mid \check T\big]}
{\EE\big[\theta^{-|\cE_k|}\mid \check T\big]}\leq
\frac{d p_1/\theta}{p_0+p_1/\theta} |V_k(\check T)|,
\]
and
\[
\EE[\theta^\ell|\cE_k| \one_{A_1^c\cap A_2^c}]\leq 
\frac{d p_1/\theta}{p_0+p_1/\theta} 
\EE\Big[|V_k(\check T)|\one_{A_1^c\cap A_2^c} \theta^{\check\ell}
\prod_{j=0}^{m-1} \prod_{x\in V_j(\check T)} Z_{m-j-1}^{d_x}
\EE\big[\prod_{j=0}^{m-1} \theta^{-|\cE_j|}\mid \check T\big] 
\Big].
\]
Applying \eqref{3fns-e} in reverse it follows that
\[
\EE_m^{(\theta)}[|\cE_k| \one_{A_1^c\cap A_2^c}]\leq
\frac{d p_1/\theta}{p_0+p_1/\theta} 
\EE^{(\theta)}_m\big[|V_k(\check T)|\one_{A_1^c\cap A_2^c} \big].
\]
We bound the last expectation using stochastic domination.  Indeed,
both $|V_k(\check T)|$ and $\one_{A_1^c\cap A_2^c}$ are increasing 
functions of $\om$.   
Hence from \eqref{std}
\[
\EE^{(\theta)}_m\big[|V_k(\check T)|\one_{A_1^c\cap A_2^c} \big]\leq
\EE^+\big[|V_k(\check T)|\one_{A_1^c\cap A_2^c} \big].
\]
Write $p_i^+=e^{-\b^+}(\b^+)^i/i!$ for the
Poisson probabilities with parameter $\beta^+$,
and $p^+_{\geq i}=p^+_i+p^+_{i+1}+\dotsb$.
By a recursive computation using independence we see that 
\[
\EE^+|V_k(\check T)|=(d p_{\geq2}^+)\EE^+|V_{k-1}(\check T)|=
\dotsb=(d p_{\geq2}^+)^k.
\]
We also have 
\[
|V_k(\check T)|\one_{A_1}=\d_{k,0}\one_{A_1},\qquad
|V_k(\check T)|\one_{A_2}=(\d_{k,0}+\d_{k,1})\one_{A_2}.
\]
Using \eqref{tree-sum} we find that
\[\begin{split}
\PP^{(\theta)}_m(A_1^c\cap A_2^c \cap \S_m^\rho)
&\leq \frac{d p_1/\theta}{p_0+p_1/\theta} 
\sum_{k=0}^{m} \s_{m-k-1} 
\EE^+[|V_k(\check T)| \one_{A_1^c\cap A_2^c}]\\
&= \frac{d p_1/\theta}{p_0+p_1/\theta} 
\Big(
\s_{m-1}(1-\PP^+(A_1\cup A_2))+
\s_{m-2} (dp_{\geq2}^+ -\PP^+(A_2))\\
&\qquad\qquad\qquad
+\sum_{k=2}^{m} \s_{m-k-1} (dp_{\geq2}^+)^k
\Big)\\
&\leq c_0\Big(\s_{m-1}\frac{c_1}{d^2}+
\s_{m-2}\frac{c_2}{d^2}+
\sum_{k=2}^{m} \s_{m-k-1}\big(\frac{c_3}{d}\big)^{k}
\Big),
\end{split}\]
for constants $c_0,\dotsc,c_3$ uniform in $d$.
Now we use that there is some $c_4>0$, uniform in $d$,
such that  $\s_{m-1}\leq c_4 \s_m$ for all $m\geq 0$.  
(This can be seen e.g.\ by considering the event that $A_1$ occurs and
that $(x,t_x)$ lies in a loop reaching generation $m$ in its subtree,
where $x$ is some fixed child of $\rho$ and $t_x$ is the `time' of the
incoming link from $\rho$.  This gives $\s_m\geq\s_{m-1}
\PP_m^{(\theta)}(A_1)$.) 
It follows that
\[
\PP^{(\theta)}_m(A_1^c\cap A_2^c \cap \S_m^\rho)\leq
\frac{C'}{d^2}\Big(\s_{m-1}+\s_{m-2}+
\s_{m-2}\sum_{k=2}^{\oo} \big(\frac{c_3c_4}{d}\big)^{k-2}.
\Big).
\]
The last sum converges if $d$ is large enough, and this establishes
Prop \ref{tree-bound}.  
\end{proof}

\smallskip
\noindent
{\bf Acknowledgments:}
We thank the anonymous referees for  several
helpful suggestions to improve the presentation.
The research of JEB is supported by Vetenskapsr{\aa}det grant
2015-05195.
\smallskip

{
\renewcommand{\refname}{\small References}
\bibliographystyle{symposium}

\begin{thebibliography}{99}
\vspace{-2em}
\setlength{\columnsep}{2.5em}

\begin{multicols}{2}\scriptsize

\bibitem{AN}
M.~Aizenman, B.~Nachtergaele,
{\em Geometric aspects of quantum spin states},
Comm. Math. Phys. 164, 17--63 (1994)

\bibitem{Ang}
O.~Angel,
{\em Random infinite permutations and the cyclic time random walk},
Discrete Math. Theor. Comput. Sci. Proc., 9--16 (2003)

\bibitem{BBBU}
A.~Barp, E.G.~Barp, F.-X.~Briol, D.~Ueltschi,
{\em A numerical study of the 3D random interchange and random loop models},
J. Phys. A 48, 345002 (2015)

\bibitem{BEL}
V.~Betz, J.~Ehlert, B.~Lees,
{\em Phase transition for loop representations of quantum spin systems on trees},
arXiv preprint 1804.00860 (2018)


\bibitem{BU-exp}
J.E.~Bj\"ornberg, D.~Ueltschi,
{\em Decay of transverse correlations in quantum Heisenberg models},
J. Math. Phys. 56 (2015).

\bibitem{BU-tree}
J.E.~Bj\"ornberg, D.~Ueltschi,
{\em Critical parameter of random loop model on trees},
Ann. Appl. Prob, 
 Vol. 28, No. 4, 2063--2082 (2018).

\bibitem{FKK}
Yu.A.~Fridman, O.A.~Kosmachev, Ph.N.~Klevets,
{\em Spin nematic and orthogonal nematic states in $S=1$ non-Heisenberg magnet},
J. Magnetism and Magnetic Materials 325, 125--129 (2013)

\bibitem{georgii-kuneth}
H.-O.~Georgii, T.~K\"uneth,
{\em Stochastic comparison of point random fields},
J. Appl. Probab. 34, 868--881 (1997)


\bibitem{GUW}
C.~Goldschmidt, D.~Ueltschi, P.~Windridge,
{\em Quantum Heisenberg models and their probabilistic representations},
in Entropy and the Quantum II, Contemp. Math. 552, 177--224 (2011)


\bibitem{Ham1}
A.~Hammond,
{\em Infinite cycles in the random stirring model on trees},
Bull. Inst. Math. Acad. Sin. 8, 85--104 (2013)
 
\bibitem{Ham2}
A.~Hammond,
{\em Sharp phase transition in the random stirring model on trees},
Probab. Theory Rel. Fields 161, 429--448 (2015)

\bibitem{HH}
A.~Hammond, M.~Hegde,
{\em Critical point for infinite cycles in a random loop model on
  trees}, 
 arXiv:1805.11772


\bibitem{San}
A.W.~Sandvik,
{\em Critical temperature and the transition from quantum to classical order parameter fluctuations in the three-dimensional Heisenberg antiferromagnet},
Phys. Rev. Lett. 80, 5196 (1998)


\bibitem{TTI}
K. Tanaka, A. Tanaka, T. Idokagi,
{\em Long-range order in the ground state of the $S=1$ isotropic bilinear-biquadratic exchange Hamiltonian},
J. Phys. A {\bf 34}, 8767--8780 (2001)

\bibitem{Toth}
B.~T\'oth,
{\em Improved lower bound on the thermodynamic pressure of 
the spin $1/2$ Heisenberg ferromagnet},
Lett. Math. Phys. 28, 75--84 (1993)

\bibitem{TAW}
M.~Troyer, F.~Alet, S.~Wessel,
{\em Histogram methods for quantum systems: from reweighting to Wang-Landau sampling},
Braz. J. Phys. 34, 377 (2004)

\bibitem{Uel13}
D.~Ueltschi,
{\em Random loop representations for quantum spin systems},
J. Math. Phys. 54, 083301 (2013)



\end{multicols}\end{thebibliography}

}

\end{document}